\newcommand{\suchthat}{\;\ifnum\currentgrouptype=16 \middle\fi|\;}
\newcommand*{\indep}{%
  \mathbin{%
    \mathpalette{\@indep}{}%
  }%
}
\newcommand*{\nindep}{%
  \mathbin{
    \mathpalette{\@indep}{\not}
  }%
}
\newcommand*{\@indep}[2]{%
  \sbox0{$#1\perp\m@th$}
  \sbox2{$#1=$}
  \sbox4{$#1\vcenter{}$}
  \rlap{\copy0}
  \dimen@=\dimexpr\ht2-\ht4-.2pt\relax
  \kern\dimen@
  {#2}%
  \kern\dimen@
  \copy0 
} 
\newcommand*{\algrule}[1][\algorithmicindent]{%
  \makebox[#1][l]{%
    \hspace*{.2em}
    \vrule height .75\baselineskip depth .25\baselineskip
  }
}
\def\ALG@printindent{%
    \ifnum \theALG@nested>0
    \ifx\ALG@text\ALG@x@notext
    \else
    \unskip
    \ALG@printindent@tempcnta=1
    \loop
    \algrule[\csname ALG@ind@\the\ALG@printindent@tempcnta\endcsname]%
    \advance \ALG@printindent@tempcnta 1
    \ifnum \ALG@printindent@tempcnta<\numexpr\theALG@nested+1\relax
    \repeat
    \fi
    \fi
}
\patchcmd{\ALG@doentity}{\noindent\hskip\ALG@tlm}{\ALG@printindent}{}{\errmessage{failed to patch}}
\patchcmd{\ALG@doentity}{\item[]\nointerlineskip}{}{}{} 
\newtheorem{prop}{Proposition}
\newtheorem{corr}{Corollary}
\begin{document}

\title{Multi-dimensional Lorenz-Based Chaotic Waveforms for Wireless Power Transfer}

\author{Authors}
\author{Priyadarshi Mukherjee, \textit{Member, IEEE}, Constantinos Psomas, \textit{Senior Member, IEEE}, and~Ioannis Krikidis, \textit{Fellow, IEEE}
\thanks{P. Mukherjee, C. Psomas, and I. Krikidis are with the Department of Electrical and Computer Engineering, University of Cyprus, Nicosia 1678 (E-mail: \{mukherjee.priyadarshi, psomas, krikidis\}@ucy.ac.cy).

This work was co-funded by the European Regional Development Fund and the Republic of Cyprus through the Research and Innovation Foundation, under the projects INFRASTRUCTURES/1216/0017 (IRIDA) and POST-DOC/0916/0256 (IMPULSE). This work has also received funding from the European Research Council (ERC) under the European Union's Horizon 2020 research and innovation programme (Grant agreement No. 819819).}}

\maketitle
\begin{abstract}
In this paper, we investigate multi-dimensional chaotic signals with respect to wireless power transfer (WPT). Specifically, we analyze a multi-dimensional Lorenz-based chaotic signal under a WPT framework. By taking into account the nonlinearities of the energy harvesting process, closed-form analytical expressions for the average harvested energy are derived. Moreover, the practical limitations of the high power amplifier (HPA) at the transmitter are also taken into consideration. We interestingly observe that for these types of signals, high peak-to-average-power-ratio (PAPR) is not the only criterion for obtaining enhanced WPT. We demonstrate that while the HPA imperfections do not significantly affect the signal PAPR, it notably degrades the energy transfer performance. As the proposed framework is general, we also demonstrate its application with respect to a H\'enon signal based WPT. Finally we compare Lorenz and H\'enon signals with the conventional multisine waveforms in terms of WPT performance.
\end{abstract}

\begin{IEEEkeywords}
Wireless power transfer, multi-dimensional chaotic signal, Lorenz system, energy harvesting.
\end{IEEEkeywords}

\IEEEpeerreviewmaketitle

\section{Introduction}

\lettrine[lines=2]{E}{merging} technologies such as the Internet of Things and machine type communications are expected to support a massive number of wireless devices. This becomes a critical issue given their limitations in terms of energy resources. In this context, wireless power transfer (WPT) is nowadays attracting more and more attention \cite{nmag}, which refers to the transmission of dedicated radio-frequency (RF) signals to wirelessly charge devices, located at a certain distance from the transmitter. WPT is essentially based on the efficient conversion of the received RF signals to direct current (DC). This RF-to-DC conversion is typically done by a rectifying antenna (rectenna) circuit at the receiver. The rectenna harvests electromagnetic energy from the RF signal, followed by rectification and filtering by means of a rectifier and a low pass filter (LPF), respectively \cite{wdesg}. For obtaining an enhanced WPT performance, the choice and design of the transmitted signal is equally important as the transmission power. Even though multitone waveforms result in better energy transfer performance with increasing number of tones \cite{wdesg}, the authors in \cite{bweff} claim that these signals can improve the WPT performance only if  the frequency-spacing between the neighbouring tones is properly selected. In this context, several other signal designs have also been investigated.

The authors in \cite{papr}, demonstrate that the nonlinearity at the rectenna causes waveforms with high peak-to-average-power-ratio (PAPR) to provide a higher DC output.  The works in \cite{chaosexp1} and \cite{chaosexp2}, propose the use of chaotic signals to improve the RF-to-DC conversion efficiency, where it is experimentally observed that chaotic signals are beneficial for WPT. The authors in \cite{jstsp}, propose a differential chaos shift keying (DCSK)-based WPT architecture and an associated DCSK-based waveform, solely for WPT. Moreover, these works also illustrate that channel fading further enhances WPT capability of chaotic signal-based waveforms. The above studies focus on the WPT performance of DCSK, which is an one dimensional chaotic signal. Even though experimental results demonstrate that multi-dimensional chaotic signals are beneficial for WPT \cite{chaosexp1,chaosexp2}, no analytical framework exists, to support this claim. The specific gains in energy harvesting (EH) from multi-dimensional chaotic signals have not been properly explored.

Motivated by this, in this letter, we investigate multi-dimensional chaotic signal-based waveform designs for WPT. In particular, we study a point-to-point WPT system, where a Lorenz signal generator is used at the transmitter. We characterize the harvested energy in terms of the parameters of the transmitted waveform. Our study also takes into account an EH model based on the nonlinearities of the rectification process. Even though the analytical results provided are specifically for the Lorenz system, the framework is general. For the sake of completeness, we have also demonstrated briefly how the system model can be extended to a H\'enon system, which is another class of multi-dimensional chaotic systems. Extensive Monte Carlo simulations validate the analysis. Finally, we compare the WPT performance of the Lorenz and H\'enon signal with the conventional $N$-tone multisine waveforms. To the best of our knowledge, this is the first work that presents a complete analytical framework for multi-dimensional chaotic signal-based WPT.

\section{System Model}
Consider a point-to-point WPT set-up, where the transmitter is a Lorenz signal generator and the receiver consists of an EH circuit that employs a rectifier to convert the received signal to DC. A conventional Lorenz system at the transmitter is defined as a three-dimensional nonlinear dynamical system  \cite{lorenz}
\vspace{-1mm}
\begin{subequations} \label{ldef}
\begin{align} 
\dot{x}(t)&=\sigma(y(t)-x(t)), \label{ldef1}\\
\dot{y}(t)&=x(t)(r-z(t))-y(t), \label{ldef2}\\
\dot{z}(t)&=x(t)y(t)-\beta z(t), \label{ldef3}
\end{align}
\end{subequations}
where $x(t),y(t),z(t)$ are state variables, $\sigma,r,\beta>0$ are control parameters, and $\dot{x}(t),\dot{y}(t),$ and $\dot{z}(t)$ denote $\frac{dx(t)}{dt},\frac{dy(t)}{dt},$ and $\frac{dz(t)}{dt}$, respectively. When a Lorenz-based transmitter is designed, the values of various circuit components (i.e. the resistors and capacitors) at the transmitter are controlled by
$\sigma,r,$ and $\beta$, respectively.

At the receiver side, the received noisy signal is
\vspace{-2mm}
\begin{equation}  \label{rcvd}
\tilde{x}(t)=\sqrt{P_td^{-\alpha}}x(t)+n,
\end{equation}
where $P_t$ is the transmission power, $d$ is the transmitter-receiver distance, $\alpha$ is the pathloss exponent, and $n$ is the additive white Gaussian noise (AWGN)\footnote{In this work, we focus on the aspect of waveform design. The consideration of fading is straightforward as we would only need its moments \cite{jstsp}.}. It is worth noting that one of the main aspects of the Lorenz signal, when used for the purpose of information transfer, is the synchronization issue at the receiver \cite{dyran}. However, when used for purely energy transfer, we do not need any synchronization block at the receiver.

By considering a circuit-based nonlinear model of the harvester circuit, the output DC current is approximated in terms of the received signal $\tilde{x}(t)$ as \cite{wdesg}
\vspace{-2mm}
\begin{equation} \label{brunoeh}
\eta_{\rm MC}=k_2R_{ant}\mathbb{E} \{ |\tilde{x}(t)|^2 \}+k_4R_{ant}^2\mathbb{E} \{ |\tilde{x}(t)|^4 \},
\end{equation}
where the subscript MC refers to the multi-dimensional chaotic transmission. $\eta_{\rm MC}$ is a monotonically increasing function of the DC component of the current at the harvester output and the parameters $k_2,k_4,$ and $R_{ant}$ are determined by the circuit characteristics. Any RF energy harvesting from the AWGN is considered to be negligible and thus it is ignored \cite{wdesg}.

\section{Lorenz signal-based WPT}

We investigate the effect of the Lorenz circuit-based chaotic signal on WPT. The general approach to gain insights on the WPT performance of any multi-dimensional chaotic system, is to obtain its equilibrium points and investigate the system behavior around them \cite{eqlb}. As we deal with the Lorenz system, we first obtain its corresponding equilibrium points\footnote{Although the proposed framework is specifically for the Lorenz system, it can be extended to any multi-dimensional chaotic circuit, namely Colpitts oscillator, R{\"o}ssler system, or H\'enon system \cite{henon} (see Section III-C).}. Accordingly, we evaluate the harvested DC $\eta_{\rm MC}$ in terms of the received signal $\tilde{x}(t)$ and its associated control parameters, i.e. $\sigma,r,$ and $\beta$.

It should be noted that the practical implementation of \eqref{ldef} by means of an electronic circuit is a complicated process due to the wide dynamic range of the state variables $x(t),y(t),$ and $z(t)$ \cite{dyran}. This bottleneck arises due to the limited transmission capability of a practical high power amplifier (HPA) at the transmitter, which is resolved by introducing a technique of scaling, i.e. we scale the state variables  \eqref{ldef} as
\vspace{-2mm}
\begin{equation}    \label{scaling}
x_{\rm sc}(t)=\frac{x(t)}{\epsilon_x}, \quad y_{\rm sc}(t)=\frac{y(t)}{\epsilon_y}, \quad \text{and} \qquad z_{\rm sc}(t)=\frac{z(t)}{\epsilon_z},
\end{equation}
where $\epsilon_x,\epsilon_y,\epsilon_z$ $\in [1,\infty)$ are scaling parameters.

\subsection{Steady state analysis}    \label{ssa}

By replacing \eqref{scaling} in \eqref{ldef}, we obtain the scaled Lorenz system
\vspace{-2mm}
\begin{subequations} \label{sldef}
\begin{align} 
\dot{x}_{\rm sc}(t)&=\sigma\left( \frac{\epsilon_y}{\epsilon_x}y_{\rm sc}(t)-x_{\rm sc}(t) \right), \label{sldef1}\\
\dot{y}_{\rm sc}(t)&=\frac{\epsilon_x}{\epsilon_y} x_{\rm sc}(t) \left( r-\epsilon_zz_{\rm sc}(t) \right) -y_{\rm sc}(t), \label{sldef2}\\
\dot{z}_{\rm sc}(t)&=\frac{\epsilon_x\epsilon_y}{\epsilon_z} x_{\rm sc}(t)y_{\rm sc}(t)-\beta z_{\rm sc}(t). \label{sldef3}
\end{align}
\end{subequations}
To gain insights on the properties of this modified Lorenz system, we need to locate the equilibrium points, which is computed by setting $\dot{x}_{\rm sc}(t)=\dot{y}_{\rm sc}(t)=\dot{z}_{\rm sc}(t)=0$\cite{eqlb}.
\begin{prop}    \label{prop1}
The equilibrium points $p_{\rm sc}=\left(x_{\rm eq},y_{\rm eq},z_{\rm eq} \right)$ for a scaled Lorenz system are
\vspace{-1mm}
\begin{subnumcases}{\label{seqp} p_{\rm sc}=}
(0,0,0), \label{seqp1}\\
\left( \frac{\sqrt{\beta(r-1)}}{\epsilon_x},\frac{\sqrt{\beta(r-1)}}{\epsilon_y},\frac{r-1}{\epsilon_z}\right) ,  \label{seqp2} \\
\left( -\frac{\sqrt{\beta(r-1)}}{\epsilon_x},-\frac{\sqrt{\beta(r-1)}}{\epsilon_y},\frac{r-1}{\epsilon_z}\right) . \label{seqp3}
\end{subnumcases}
\end{prop}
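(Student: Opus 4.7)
The plan is to solve the algebraic system obtained by setting the right-hand sides of \eqref{sldef1}--\eqref{sldef3} equal to zero, and then verify that the three solutions partition into the cases \eqref{seqp1}--\eqref{seqp3}. Since \eqref{sldef} is an affine-like rescaling of \eqref{ldef}, I expect the number of equilibria to be preserved, with only the coordinates multiplied by the reciprocals of the scaling factors.

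First, I would use \eqref{sldef1} to eliminate one variable. Setting $\dot{x}_{\rm sc}=0$ and using $\sigma>0$ gives the linear relation $x_{\rm eq}=(\epsilon_y/\epsilon_x)\,y_{\rm eq}$, equivalently $y_{\rm eq}=(\epsilon_x/\epsilon_y)\,x_{\rm eq}$. Substituting this into $\dot{y}_{\rm sc}=0$ in \eqref{sldef2} yields
\begin{equation*}
\tfrac{\epsilon_x}{\epsilon_y}\, x_{\rm eq}\bigl(r-\epsilon_z z_{\rm eq}\bigr)=\tfrac{\epsilon_x}{\epsilon_y}\, x_{\rm eq},
\end{equation*}
which forces either $x_{\rm eq}=0$ or $z_{\rm eq}=(r-1)/\epsilon_z$. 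These are the only two branches, and I would treat them separately.

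In the first branch ($x_{\rm eq}=0$), the relation $y_{\rm eq}=(\epsilon_x/\epsilon_y)\,x_{\rm eq}$ gives $y_{\rm eq}=0$, and then \eqref{sldef3} collapses to $\beta z_{\rm eq}=0$, whence $z_{\rm eq}=0$; this is exactly \eqref{seqp1}. In the second branch, $z_{\rm eq}$ is already fixed, and I would substitute it together with $y_{\rm eq}=(\epsilon_x/\epsilon_y)\,x_{\rm eq}$ into \eqref{sldef3} to obtain
\begin{equation*}
\tfrac{\epsilon_x^{2}}{\epsilon_z}\, x_{\rm eq}^{2}=\beta\,\tfrac{r-1}{\epsilon_z},
\end{equation*}
so that $x_{\rm eq}=\pm\sqrt{\beta(r-1)}/\epsilon_x$, and correspondingly $y_{\rm eq}=\pm\sqrt{\beta(r-1)}/\epsilon_y$ with the same sign. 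These give precisely \eqref{seqp2} and \eqref{seqp3}.

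There is no real mathematical obstacle here; the whole argument is polynomial-algebraic case analysis. The only thing to be careful about is bookkeeping of the scaling factors $\epsilon_x,\epsilon_y,\epsilon_z$, together with the implicit restriction $r\geq 1$ that is needed for \eqref{seqp2}--\eqref{seqp3} to be real equilibria (for $r<1$, only the origin survives), which should be stated as a standing assumption consistent with the usual Lorenz setting.
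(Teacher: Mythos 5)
Your proposal is correct and follows essentially the same route as the paper's own proof: set the right-hand sides of the scaled system to zero, use the first equation to relate $y_{\rm eq}$ to $x_{\rm eq}$, extract $z_{\rm eq}=(r-1)/\epsilon_z$ from the second, and solve the third for $x_{\rm eq}=\pm\sqrt{\beta(r-1)}/\epsilon_x$. Your explicit two-branch case analysis (either $x_{\rm eq}=0$ or $z_{\rm eq}=(r-1)/\epsilon_z$) is in fact slightly cleaner than the paper's presentation, which recovers the origin as a separate observation, but the substance is identical.
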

\begin{proof}
See Appendix \ref{app1}.
\end{proof}

To further obtain insights into an $n$-dimensional nonlinear system, we fix any $(n-1)$ parameters and observe the system behaviour with respect to the remaining parameter \cite{sparrow}. Accordingly, in this case, we fix $\sigma,\beta$ and vary $r$.

We observe from \eqref{seqp2} and \eqref{seqp3}, that these equilibrium points do not exist for $r<1$, as they must be real \cite{eqlb}. Furthermore, to have a stable system, $r$ can not attain any arbitrary value greater than unity. Hence, we state the following proposition that refers to the range of $r$ that ensures system stability.

\begin{prop}    \label{prop2}
Equilibrium points \eqref{seqp2} and \eqref{seqp3} are stable, if and only if,
\vspace{-1mm}
\begin{equation}
r \in \left( 1,\frac{\sigma(\sigma+\beta+3)}{\sigma-\beta-1}\right).
\end{equation}
\end{prop}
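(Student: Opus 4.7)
The plan is to analyze local asymptotic stability of \eqref{seqp2} and \eqref{seqp3} through linearization, and then to apply the Routh--Hurwitz criterion to the resulting characteristic polynomial so as to extract the admissible range of $r$. A useful preliminary observation is that the scaling in \eqref{scaling} is a diagonal linear change of coordinates, so the Jacobian of \eqref{sldef} at a scaled equilibrium is related by a similarity transformation with $D=\mathrm{diag}(1/\epsilon_x,1/\epsilon_y,1/\epsilon_z)$ to the Jacobian of \eqref{ldef} at the corresponding unscaled equilibrium. Consequently the spectrum, and hence the stability region in $(\sigma,\beta,r)$-space, must be independent of $\epsilon_x,\epsilon_y,\epsilon_z$, which is already consistent with the $\epsilon$-free form of the claim.

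Concretely, I would compute the Jacobian of \eqref{sldef} and evaluate it at \eqref{seqp2} (the case of \eqref{seqp3} being identical by symmetry). At these points, the combinations $r-\epsilon_z z_{\rm eq}$, $\epsilon_x x_{\rm eq}$ and $\epsilon_y y_{\rm eq}$ collapse to expressions in $\beta$, $r$, and balanced ratios of the $\epsilon$'s. Expanding $\det(J-\lambda I)$ should then produce the cubic
\begin{equation*}
\lambda^3+(\sigma+\beta+1)\lambda^2+\beta(\sigma+r)\lambda+2\sigma\beta(r-1)=0,
\end{equation*}
in which the scaling factors cancel as predicted. Applying the Routh--Hurwitz conditions for such a real cubic $\lambda^3+a_2\lambda^2+a_1\lambda+a_0$ to have all roots with strictly negative real part requires $a_0,a_1,a_2>0$ together with $a_1 a_2>a_0$. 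Positivity of $a_2=\sigma+\beta+1$ and of $a_1=\beta(\sigma+r)$ is immediate; positivity of $a_0=2\sigma\beta(r-1)$ enforces $r>1$; and the product inequality simplifies, after cancelling $\beta$ and grouping powers of $r$, to $\sigma(\sigma+\beta+3)>r(\sigma-\beta-1)$, which, under the implicit assumption $\sigma>\beta+1$, rearranges to the upper bound in the proposition. The converse direction follows by reversing each step: if $r$ lies outside the stated interval, at least one Routh--Hurwitz condition fails, so the Jacobian acquires an eigenvalue with non-negative real part and the equilibrium ceases to be asymptotically stable.

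The main technical obstacle I anticipate is the bookkeeping of $\epsilon$-factors when expanding the $3\times3$ determinant, since several off-diagonal entries carry ratios such as $\epsilon_y/\epsilon_x$, $\epsilon_z/\epsilon_y$ and $\epsilon_x/\epsilon_z$ whose products must cancel cleanly for the characteristic polynomial to be scaling-free. The similarity-transform remark makes this cancellation structural rather than coincidental, but I would still perform the direct calculation so that Proposition \ref{prop2} stands on its own without separately invoking the invariance argument.
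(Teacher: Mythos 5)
Your proposal is correct and follows essentially the same route as the paper's proof: linearize at the nontrivial equilibria, obtain the characteristic cubic $\lambda^3+(\sigma+\beta+1)\lambda^2+\beta(\sigma+r)\lambda+2\sigma\beta(r-1)=0$ (in which the scaling factors cancel), and apply the Routh--Hurwitz conditions to get $r>1$ and, under $\sigma>\beta+1$, the upper bound $r<\sigma(\sigma+\beta+3)/(\sigma-\beta-1)$. Your similarity-transformation remark explaining the structural cancellation of the $\epsilon$'s is a nice addition not present in the paper, but it does not change the argument.
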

\begin{proof}
See Appendix \ref{app2}.
\end{proof}

It is worth noting that $\epsilon_x,\epsilon_y,\epsilon_z$ do not have any effect on the domain of $r$ as stated in Proposition \ref{prop2}. In general, Lorenz chaotic systems are very sensitive to the initial conditions $p_{\rm in}=(x_0,y_0,z_0)$. This implies that two trajectories starting very close together rapidly diverge after a very short span of time \cite{sparrow}. However, Proposition \ref{prop1} and \ref{prop2} state that when $r \in \left( 1,\frac{\sigma(\sigma+\beta+3)}{\sigma-\beta-1}\right)$, the system asymptotically attains \eqref{seqp2} or \eqref{seqp3} in finite time, irrespective of $p_{\rm in}$.

As illustrated in Fig. \ref{fig:syseff}, we can not accurately say to which equilibrium point, i.e. \eqref{seqp2} or \eqref{seqp3}, will $p_{\rm in}$ reach at time $t \rightarrow \infty$; however, we can state that $p_{\rm in}$ reaches one of them. The figure demonstrates an unscaled Lorenz system, i.e. $\epsilon_x=\epsilon_y=\epsilon_z=1$, for the sake of presentation.
\begin{figure}[!t]
  \centering
  \subfigure[Effect of $r$.]
  {{\includegraphics[height=2.8cm,width=0.492\linewidth]{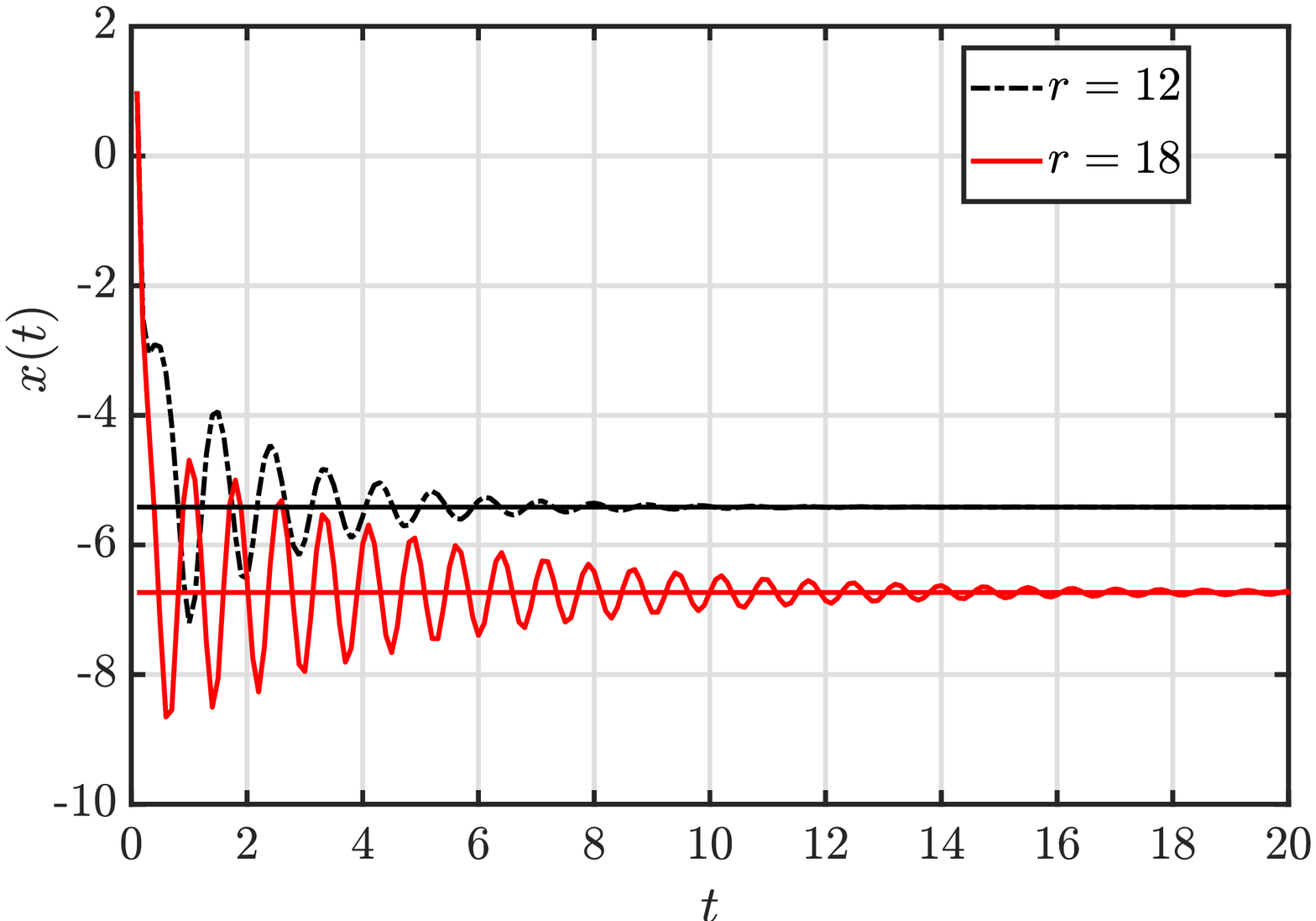} }
    \label{fig:reff}}
    \hspace{-6mm}
  \subfigure[Effect of initial point.]   
  {{\includegraphics[height=2.8cm,width=0.492\linewidth]{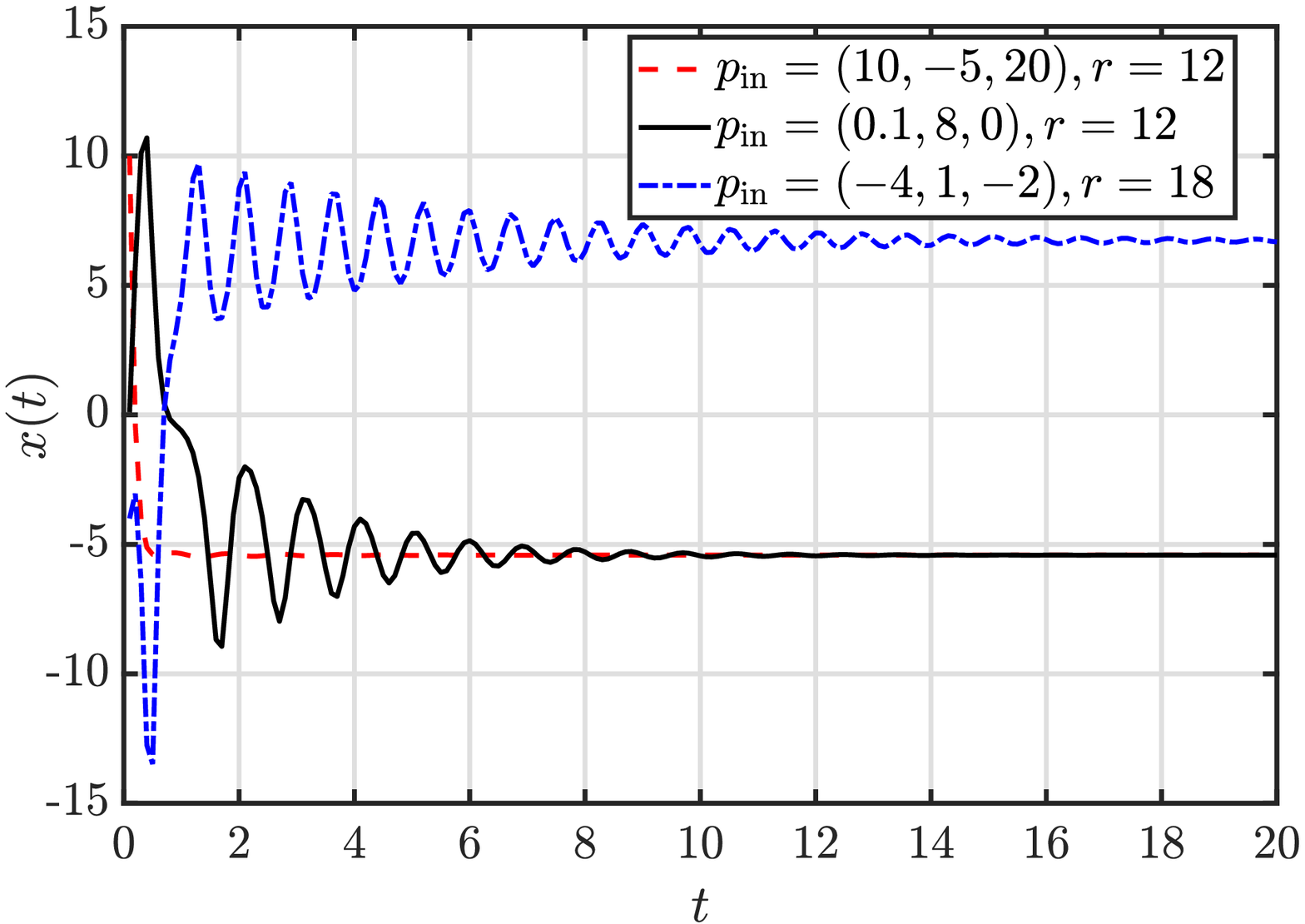} }
   \label{fig:ineff}  }
   \vspace{-2.4mm}
    \caption{Effects of system parameters; $\sigma\!=\!10,\beta\!=\!8/3,$ and $\epsilon_x\!=\!\epsilon_y\!=\!\epsilon_z\!=\!1$.}
    \label{fig:syseff}
\vspace{-4mm}
\end{figure}
Fig. \ref{fig:reff} depicts the effect of varying $r$ on $x(t)$, for $p_{\rm in}=(1,-5,20)$. We observe that even for identical $p_{\rm in}$, $x(t)$ converges to different values for different $r$; the convergence rate follows a decreasing trend with increasing $r$. Specifically, while $x(t)$ corresponding to $r=12$ converges at $t<10$, $x(t)$ corresponding to $r=18$ converges at $t\approx20$ for the same $p_{\rm in}$. Furthermore, it is interesting to observe from Fig. \ref{fig:ineff} that different $p_{\rm in}$ results in identical steady state values at $t \rightarrow \infty$ for identical $r$. Note that both $p_{\rm in}=(10,-5,20)$ and $p_{\rm in}=(0.1,8,0)$, with $r=12$, attain identical $\lim\limits_{t \rightarrow \infty} x(t)$. This corroborates the previous claim, i.e. when the system is stable, we have $\lim\limits_{t \rightarrow \infty} x(t) \in \left\lbrace -\frac{\sqrt{\beta(r-1)}}{\epsilon_x},\frac{\sqrt{\beta(r-1)}}{\epsilon_x} \right\rbrace $ irrespective of $p_{\rm in}$, which is a constant quantity. However, this uncertainty does not affect the characterization of the harvested DC, as discussed in the next section.

\subsection{Harvested energy}

From \eqref{rcvd} and \eqref{brunoeh}, we observe that $\eta_{\rm MC}$ is a function of the $x$-component of the equilibrium points \eqref{seqp2} and \eqref{seqp3}. Hence, we state the following proposition.
\begin{prop}  \label{prop3}
The steady state harvested energy for a scaled Lorenz system is
\vspace{-2mm}
\begin{equation}  \label{sharvchaos}
\eta_{\rm SL}=\frac{\rho_1\beta(r-1)}{\epsilon_x^2}+\frac{\rho_2\beta^2(r-1)^2}{\epsilon_x^4},
\end{equation}
where $\rho_1=d^{-\alpha}k_2R_{ant}P_t$ and $\rho_2=d^{-2\alpha}k_4R_{ant}^2P_t^2$.
\end{prop}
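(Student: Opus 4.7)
The plan is to combine the equilibrium-point characterization from Propositions~\ref{prop1} and~\ref{prop2} with the nonlinear harvester model \eqref{brunoeh}. Specifically, once $r$ lies in the stability range, the scaled Lorenz trajectory asymptotically attains one of the non-trivial equilibria \eqref{seqp2}--\eqref{seqp3} in finite time, irrespective of $p_{\rm in}$. The only quantities that enter the harvested energy through \eqref{brunoeh} are $\mathbb{E}\{|\tilde{x}(t)|^2\}$ and $\mathbb{E}\{|\tilde{x}(t)|^4\}$, so it suffices to evaluate the second and fourth moments of $x_{\rm sc}(t)$ at $t\to\infty$.

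First I would substitute the received signal \eqref{rcvd} (with $x(t)$ replaced by the scaled state $x_{\rm sc}(t)$) into \eqref{brunoeh} and expand. Because the EH model neglects any RF energy harvested from the AWGN \cite{wdesg}, the cross-terms involving $n$ can be dropped, leaving $\mathbb{E}\{|\tilde{x}(t)|^2\} = P_t d^{-\alpha}\,\mathbb{E}\{|x_{\rm sc}(t)|^2\}$ and $\mathbb{E}\{|\tilde{x}(t)|^4\} = P_t^{2} d^{-2\alpha}\,\mathbb{E}\{|x_{\rm sc}(t)|^4\}$. Next, using Proposition~\ref{prop1} and the observation that both \eqref{seqp2} and \eqref{seqp3} have $x$-component of magnitude $\sqrt{\beta(r-1)}/\epsilon_x$, I would conclude that $\lim_{t\to\infty}|x_{\rm sc}(t)|^2 = \beta(r-1)/\epsilon_x^2$ and $\lim_{t\to\infty}|x_{\rm sc}(t)|^4 = \beta^2(r-1)^2/\epsilon_x^4$. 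Plugging these values into the expanded \eqref{brunoeh} and collecting the circuit/channel constants into $\rho_1$ and $\rho_2$ delivers \eqref{sharvchaos}.

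The main subtlety is not an algebraic one but a conceptual one: the Lorenz system is bistable in the regime of interest, so a priori it is unclear whether the steady-state moments are even well-defined. This obstacle is resolved precisely by the remark following Proposition~\ref{prop2}, namely that while the limit of $x_{\rm sc}(t)$ depends on the initial condition through a sign ambiguity, the modulus (and hence every even moment) does not. This is what makes the harvested energy a deterministic function of $\sigma,r,\beta,\epsilon_x$ only, as claimed. Beyond this observation, the proof reduces to elementary substitution.
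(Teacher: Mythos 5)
Your proposal is correct and follows essentially the same route as the paper's own proof in Appendix C: invoke Propositions~\ref{prop1} and~\ref{prop2} to conclude convergence to $P_1$ or $P_2$, note that both share the same second and fourth moments of $x_{\rm sc}(t)$ since they differ only in sign, and substitute these into \eqref{brunoeh} with the constants absorbed into $\rho_1$ and $\rho_2$. Your explicit remark that the bistability is harmless because every even moment is sign-invariant is exactly the (implicit) key point of the paper's argument.
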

\begin{proof}
See Appendix \ref{app3}.
\end{proof}
We have not considered any specific bandwidth for this work. However the authors in \cite{nlor} demonstrate that the Lorenz signal is essentially  narrowband in nature. Hence, the effect of fading can be taken into account by multiplying the first and second terms of $\eta_{\rm SL}$ in \eqref{sharvchaos} with the second and fourth (raw) moments of the channel fading gain \cite{jstsp}, respectively.
Proposition \ref{prop3} provides a generalized closed-form expression for the harvested energy in terms of the scaling parameters\footnote{The energy transfer model considered in this work is based on the assumption that the harvester operates in the nonlinear region \cite{wdesg}. If the power of the harvester input signal becomes too large, the diode inside the harvester is forced into the saturation region of operation, making the derived analytical results inapplicable.}. The harvested DC  $\eta_{\rm SL}$ corresponding to an ideal scenario, i.e. $\epsilon_x\!=\!\epsilon_y\!=\!\epsilon_z \rightarrow 1$, can also be obtained as a special case, given in the following corollary.
\begin{corr}
In an ideal scenario, $\eta_{\rm SL}$ is given by
\vspace{-2mm}
\begin{equation}  \label{hel}
\eta_{\rm L}=\rho_1\beta(r-1)+\rho_2\beta^2(r-1)^2.
\end{equation}
\end{corr}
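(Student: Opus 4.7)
The plan is to treat this corollary as a direct specialization of Proposition \ref{prop3}, since the ideal scenario is defined by letting the scaling parameters $\epsilon_x,\epsilon_y,\epsilon_z$ all tend to $1$ (this corresponds to an HPA with sufficient dynamic range to transmit the unscaled Lorenz state variables). So the work reduces to substituting into \eqref{sharvchaos}.

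First, I would observe that \eqref{sharvchaos} only contains $\epsilon_x$ and not $\epsilon_y$ or $\epsilon_z$. This is not an oversight: from \eqref{rcvd}, the received signal $\tilde{x}(t)$ depends only on the $x$-component of the Lorenz trajectory, and the harvested DC \eqref{brunoeh} is a functional of $\tilde{x}(t)$ alone, so only the $x$-coordinate of the equilibrium points \eqref{seqp2}--\eqref{seqp3}, namely $\pm\sqrt{\beta(r-1)}/\epsilon_x$, enters the expression for $\eta_{\rm SL}$. Consequently, taking $\epsilon_y,\epsilon_z\to 1$ has no effect on \eqref{sharvchaos}, and the only substitution that matters is $\epsilon_x\to 1$.

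Second, I would plug $\epsilon_x=1$ into \eqref{sharvchaos}: the factor $1/\epsilon_x^2$ in the first term and $1/\epsilon_x^4$ in the second term both collapse to unity, producing $\eta_{\rm L}=\rho_1\beta(r-1)+\rho_2\beta^2(r-1)^2$, which is exactly \eqref{hel}. There is essentially no obstacle here; the corollary is a bookkeeping statement isolating the unscaled case so that it can be invoked later as a benchmark against the scaled expression in \eqref{sharvchaos}. The only comment worth adding is the monotonicity consequence: since $\epsilon_x\in[1,\infty)$, we have $1/\epsilon_x^2,1/\epsilon_x^4\le 1$, so $\eta_{\rm L}$ upper-bounds $\eta_{\rm SL}$, quantifying the energy loss due to HPA-driven scaling.
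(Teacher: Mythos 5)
Your proposal is correct and matches the paper's own argument, which likewise obtains the corollary by setting $\epsilon_x \rightarrow 1$ in \eqref{sharvchaos} (the paper's justification is a single sentence following the corollary). Your additional observations---that $\epsilon_y,\epsilon_z$ are absent from \eqref{sharvchaos} because only the $x$-component enters \eqref{brunoeh}, and that $\eta_{\rm L}$ upper-bounds $\eta_{\rm SL}$---are accurate and consistent with the paper's discussion of HPA-induced losses.
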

\noindent The above corollary follows directly from Proposition \ref{prop3}, by considering $\epsilon_x \rightarrow 1$. Techniques such as predistortion \cite{wcl} enable to overcome the HPA imperfections, which further enhances the WPT performance.

\begin{figure}[!t]
\centering\includegraphics[width=0.76\linewidth]{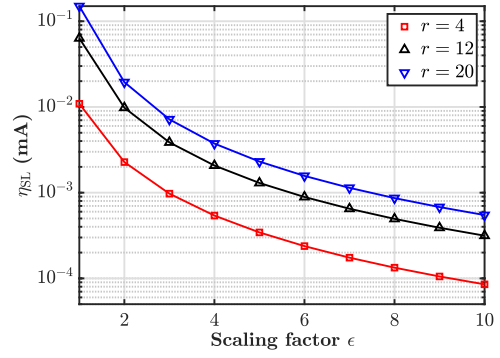}
   \vspace{-2.4mm}
\caption{Effect of scaling parameter on WPT performance; lines correspond to analysis and markers correspond to simulation results.}
\label{fig:flatharv}
\vspace{-4mm}
\end{figure}

Note that, we obtain closed-form  expressions of the harvested DC for the stable operation region of the scaled Lorenz attractor circuit. However, for $r \notin \left( 1,\frac{\sigma(\sigma+\beta+3)}{\sigma-\beta-1}\right)$, i.e. in case of an unstable Lorenz system, \eqref{sldef} does not converge to any specific point. As a result, it is not possible to analytically characterize the harvested energy for this region of operation.

Even though the above analytical results are based on the Lorenz system, the provided framework is general. Irrespective of the system, what is required is the set of equilibrium points and the corresponding stability region. To show this, we provide in the next section, how the approach can be extended to the H\'enon system \cite{henon}.

\subsection{Extension to other multi-dimensional chaotic waveforms}

The approach already presented to analytically characterize the harvested energy is general and can be applied to any chaotic system. For the sake of completeness, we consider another chaotic system, i.e. the H\'enon map. A conventional H\'enon map is defined by the following first-order difference equations \cite{henon}
\vspace{-2mm}
\begin{subequations} \label{hdef}
\begin{align} 
x_{n+1}&=y_n+1-\gamma x_n^2, \label{hdef1}\\
y_{n+1}&=\delta x_n, \label{hdef2}
\end{align}
\end{subequations}
where $x,y$ are state variables and $\gamma,\delta$ are control parameters. Note that, unlike the Lorenz system, a typical H\'enon map is governed by two state variables. Following an identical approach as before, we obtain the corresponding stable equilibrium points as
\begin{align}
p_{\rm henon}&=\left( \left(\frac{\delta-1 + \sqrt{(1-\delta)^2+4\gamma}}{2\gamma} \right) \right., \nonumber \\
& \qquad\left. \delta\left(\frac{\delta-1 + \sqrt{(1-\delta)^2+4\gamma}}{2\gamma}\right) \right),
\end{align}
with $0<\delta<1,$ $\gamma \neq 0,$ and $\displaystyle \gamma \in \left( -\frac{1}{4}(1-\delta)^2, \frac{3}{4}(1-\delta)^2 \right)$. As a result, the harvested DC in this case is obtained as
\vspace{-1mm}
\begin{equation}  \label{heh}
\eta_{\rm H}=\rho_1 \Phi^2 + \rho_2 \Phi^4,
\end{equation}
where $\Phi=\frac{\delta-1 + \sqrt{(1-\delta)^2+4\gamma}}{2\gamma}$. Moreover, if we consider a scaled H\'enon system to accommodate the HPA imperfections, a transformation of variables as described in \eqref{scaling} is required, and the equilibrium point and the stability region changes accordingly. As was in the case of a Lorenz system, it is not possible to analytically characterize the harvested energy for an unstable H\'enon system. Finally, we observe from \eqref{hel} and \eqref{heh}, it is possible to obtain $\eta_{\rm L}>\eta_{\rm H}$, only when
\vspace{-1mm}
\begin{equation}
\sqrt{\beta(r-1)}>\frac{\delta-1 + \sqrt{(1-\delta)^2+4\gamma}}{2\gamma}.
\end{equation}
\section{Numerical Results}

We consider a transmission power $P_t=30$ dBm, transmitter-receiver distance $d=20$ m, pathloss exponent $\alpha=4$, and system parameters $\sigma=10$ and $\beta=8/3$. This implies that the system is stable as long as $1<r<\frac{\sigma(\sigma+\beta+3)}{\sigma-\beta-1}=24.74$. Note that we can also consider any other value of $\sigma$ and $\beta$, which accordingly gets reflected in the range of $r$, as stated in Proposition \ref{prop2}, to have a stable system. Finally, the parameters considered for the WPT model are $k_2\!=\!0.0034,k_4\!=\!0.3829,$ and $R_{ant}\!=\!50$ $\Omega$, respectively \cite{wdesg}. In addition, all the presented results are accompanied by Monte Carlo simulations by using $10^6$ realizations.

Fig. \ref{fig:flatharv} demonstrates the effect of $r$ and scaling parameters $\epsilon_x=\epsilon_y=\epsilon_z=\epsilon$ on the WPT performance with initial point $p_{\rm in}=(4,-10,0.6)$. We observe that the theoretical results (lines) match very closely with the simulation results (markers); this verifies our proposed analytical framework. The figure illustrates that the harvested DC increases with $r$, which can not be indefinitely increased due to the stability aspect. Furthermore, the figure also demonstrates the effect of $\epsilon$ on the EH performance of the receiver. We observe that $\epsilon$ strongly affects the EH performance, when all the other system parameters remain constant. Recall that the scaling parameter depends on the practical imperfections of the HPA at the transmitter. This implies that if we are able to overcome the HPA imperfections, it enhances the harvested DC under identical scenarios.

\begin{figure}[!t]
  \centering
  \subfigure[Effect on PAPR.]
  {{\includegraphics[width=0.492\linewidth]{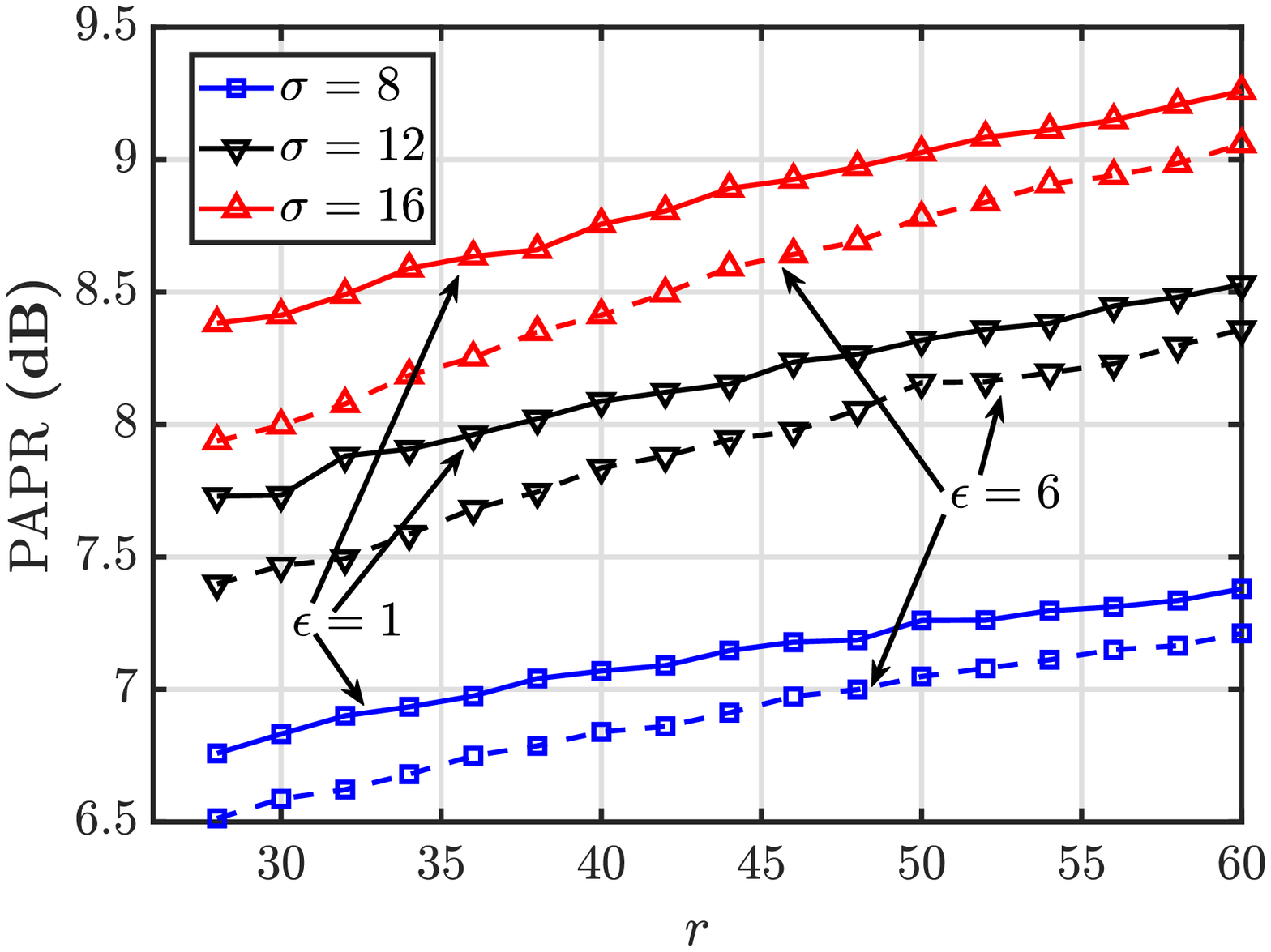} }
    \label{fig:unstab1}}
    \hspace{-6mm}
  \subfigure[Effect on harvested energy.]   
  {{\includegraphics[width=0.492\linewidth]{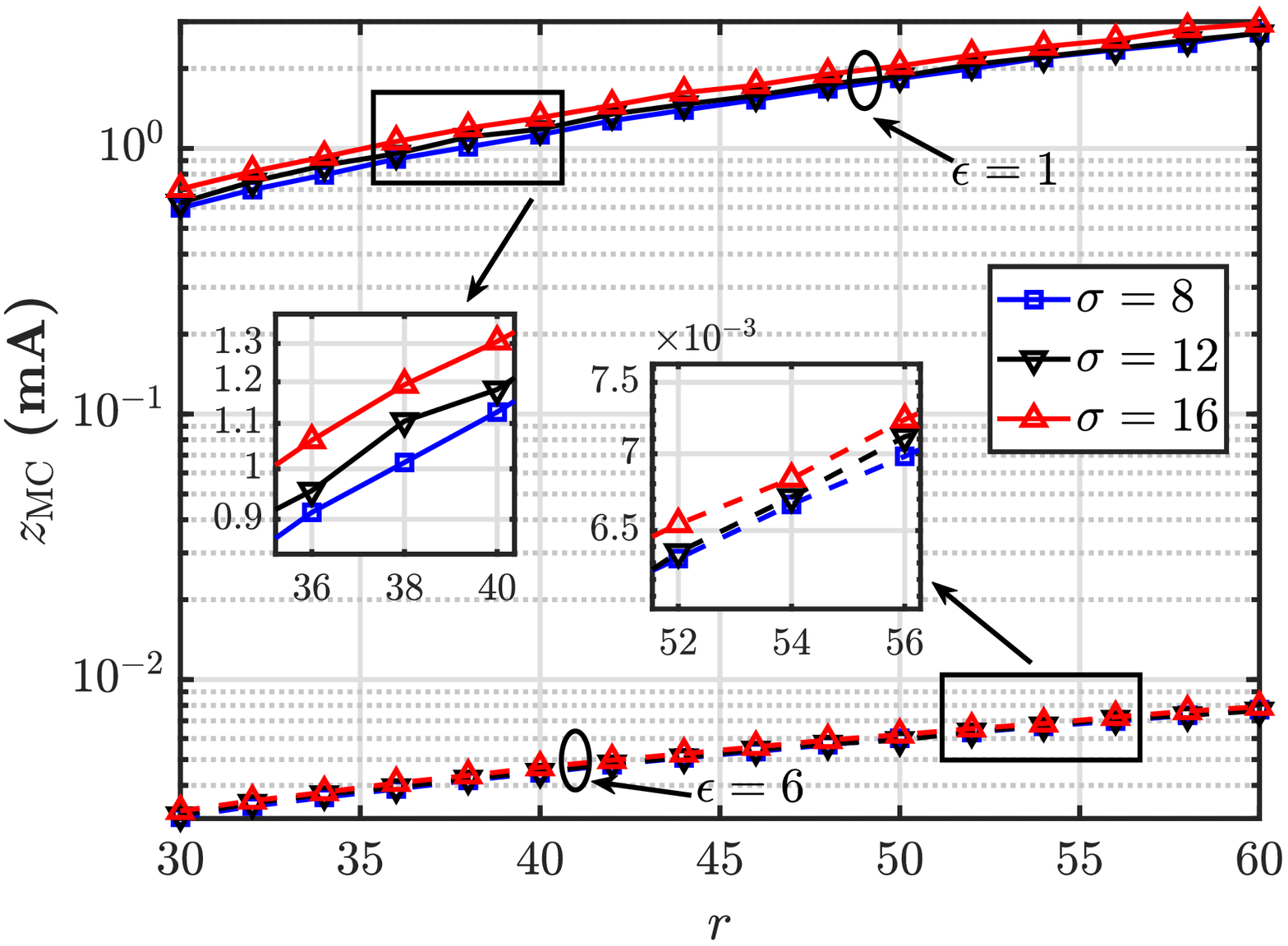} }
   \label{fig:unstab2}  }
      \vspace{-2.4mm}
    \caption{WPT performance for unstable system with $p_{\rm in}=(0.1,10,0.1)$.}
    \label{fig:unstab}
\vspace{-4mm}
\end{figure}

Note that the analytical expressions derived for the harvested DC correspond to the stable region of operation for the Lorenz system. However, when the Lorenz system operates in the unstable region, i.e. $r>\frac{\sigma(\sigma+\beta+3)}{\sigma-\beta-1}=24.74$, the system does not converge to any particular value and as a result, analytical closed-form expressions can not be derived. We observe from Fig. \ref{fig:unstab1} that the PAPR of the signal at the receiver increases with $r$. Moreover, similar to \cite{papr}, Fig. \ref{fig:unstab2} illustrates that the harvested DC increases with PAPR. It is worthy to note that while the scaling ($\epsilon_x\!=\!\epsilon_y\!=\!\epsilon_z\!=\!6$ in the figure) does not significantly affect the Lorenz system in terms of PAPR of the signal at the receiver, there is a considerable gap in the WPT performance of the scaled and unscaled Lorenz system. This detrimental effect of scaling on the WPT performance justifies Proposition $3$ and the insights obtained thereafter. Moreover, the figure also illustrates that unlike to the stable scenario, $\sigma$ has an impact on both the PAPR and WPT performance of the unstable Lorenz system. This particular observation leads to an interesting conclusion, i.e. for multi-dimensional waveforms, high PAPR is not equivalent to higher harvested energy and that the one-to-one PAPR-harvested energy mapping holds only when we are dealing with uni-dimensional signals.

By considering an ideal transmission scenario, i.e. unity scaling factor, Fig. 4 compares the WPT performance of the Lorenz and H\'enon signal in their respective stable region of operation, with the existing $N$-tone multisine waveforms [2]. We observe that the harvested energy increases with increasing $r$, as stated in Proposition 3, and also with increasing number of multitones. Moreover, we also observe from the figure that any arbitrary multi-dimensional chaotic waveform transmission does not guarantee enhanced WPT performance; for example, the performance gap for the H\'enon map at $P_t=20$ dBm between $\gamma=0.2,\delta=0.1$ and $\gamma=0.001,\delta=0.9$. Note that the values considered are solely for illustration. Our results demonstrate that the selection of parameter values is equally important as the choice of the chaotic waveform.

\section{Conclusion}
In this paper, we focused on the effects of multi-dimensional chaotic signals on WPT. We evaluated the performance of energy transfer for the Lorenz waveform under both, stable and unstable regions of operation, in terms of its parameters. Closed-form expressions for the average harvested energy were analytically derived and it was observed from exhaustive simulations, that for multi-dimensional chaotic signals, high PAPR is not the single criterion for obtaining an enhanced energy transfer performance. As the proposed framework is general in nature, we also illustrate its application with respect to a H\'enon chaotic system. Finally, we compare the Lorenz and the H\'enon signal with the multisine waveforms, where we observe that the choice of parameters for the chaotic waveforms is crucial in terms of WPT. An immediate extension of this work is to investigate the class of multi-dimensional chaotic signals in a generalized frequency selective scenario.

\begin{figure}[!t]
\centering\includegraphics[width=0.76\linewidth]{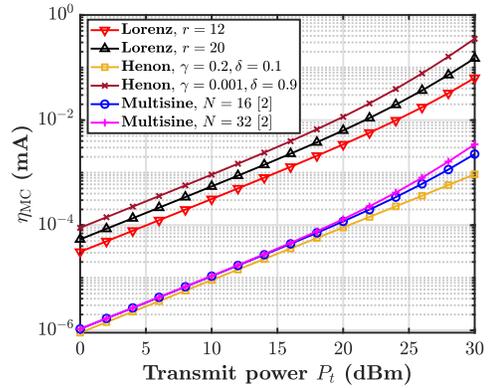}
   \vspace{-2.4mm}
\caption{Performance comparison.}
\label{fig:comp}
\vspace{-4mm}
\end{figure}

\appendices

\section{Proof of Proposition \ref{prop1}}
\label{app1}
By setting $\dot{x}_{\rm sc}(t)=\dot{y}_{\rm sc}(t)=0$\cite{eqlb} in \eqref{sldef}, we obtain
\vspace{-2mm}
\begin{equation}  \label{shrt}
y_{\rm sc}(t)=\frac{\epsilon_xx_{\rm sc}}{\epsilon_y} \quad \text{and} \quad z_{\rm sc}(t)=\frac{r-1}{\epsilon_z}.
\end{equation}
Similarly, by substituting $\dot{z}_{\rm sc}(t)=0$ in \eqref{sldef3} and combining with \eqref{shrt}, we have
\vspace{-2mm}
\begin{equation}  \label{sclxyz}
x_{\rm sc}(t)=\pm \frac{\sqrt{\!\beta(r\!-\!1)}}{\epsilon_x}, \: y_{\rm sc}(t)=\pm \frac{\sqrt{\!\beta(r\!-\!1)}}{\epsilon_y}, \: z_{\rm sc}(t)=\frac{r\!-\!1}{\epsilon_z}.
\end{equation}
Note that while \eqref{sclxyz} holds for $r>1$, the origin, i.e. $(0,0,0)$ is an equilibrium point for all values of $\sigma,r,\beta$. Thus, based on \eqref{shrt} and \eqref{sclxyz}, we obtain the equilibrium points for all possible values of the control parameters as
\vspace{-2mm}
\begin{subnumcases}{ p_{\rm sc}=}
(0,0,0), \\
\left( \frac{\sqrt{\beta(r-1)}}{\epsilon_x},\frac{\sqrt{\beta(r-1)}}{\epsilon_y},\frac{r-1}{\epsilon_z}\right) , \\
\left( -\frac{\sqrt{\beta(r-1)}}{\epsilon_x},-\frac{\sqrt{\beta(r-1)}}{\epsilon_y},\frac{r-1}{\epsilon_z}\right) .
\end{subnumcases}

\section{Proof of Proposition \ref{prop2}}
\label{app2}
We obtain the Jacobian matrix corresponding to \eqref{sldef} as \cite{papoulis}
\vspace{-2mm}
\begin{equation} \label{jsc}
J_{\rm sc}=\begin{bmatrix}
-\sigma & \displaystyle \frac{\epsilon_y}{\epsilon_x}\sigma & 0\\
\displaystyle \frac{\epsilon_x}{\epsilon_y}(r-\epsilon_zz_{\rm sc}(t)) & -1 & -\displaystyle \frac{\epsilon_x\epsilon_z}{\epsilon_y}x_{\rm sc}(t) \\
\displaystyle \frac{\epsilon_x\epsilon_y}{\epsilon_z}y_{\rm sc}(t) & \displaystyle \frac{\epsilon_x\epsilon_y}{\epsilon_z}x_{\rm sc}(t) & -\beta
\end{bmatrix}.
\end{equation}
By replacing the equilibrium points given by
\vspace{-2mm}
\begin{equation}
P_1=\left( \frac{\sqrt{\beta(r-1)}}{\epsilon_x},\frac{\sqrt{\beta(r-1)}}{\epsilon_y},\frac{r-1}{\epsilon_z}\right),
\end{equation}
\vspace{-2mm}
\begin{equation}
\text{and} \quad P_2=\left( -\frac{\sqrt{\beta(r-1)}}{\epsilon_x},-\frac{\sqrt{\beta(r-1)}}{\epsilon_y},\frac{r-1}{\epsilon_z}\right),
\end{equation}
in \eqref{jsc}, we obtain
\vspace{-2mm}
\begin{equation} \label{jeq1}
J_{\rm sc}^{(1)}\!\!=\!\!
\begin{bmatrix}
-\sigma & \displaystyle\frac{\epsilon_y}{\epsilon_x}\sigma & 0\\
\displaystyle \frac{\epsilon_x}{\epsilon_y} & -1 & -\displaystyle \frac{\epsilon_z}{\epsilon_y}\sqrt{\beta(r-1)} \\
\displaystyle \frac{\epsilon_x}{\epsilon_z}\sqrt{\beta(r-1)} & \displaystyle \frac{\epsilon_y}{\epsilon_z}\sqrt{\beta(r-1)} & -\beta
\end{bmatrix},
\end{equation}
and
\vspace{-2mm}
\begin{equation} \label{jeq2}
J_{\rm sc}^{(2)}\!\!=\!\!
\begin{bmatrix}
-\sigma & \displaystyle\frac{\epsilon_y}{\epsilon_x}\sigma & 0\\
\displaystyle \frac{\epsilon_x}{\epsilon_y} & -1 & \displaystyle \frac{\epsilon_z}{\epsilon_y}\sqrt{\beta(r-1)} \\
\displaystyle \frac{-\epsilon_x}{\epsilon_z}\sqrt{\beta(r-1)} & \displaystyle \frac{-\epsilon_y}{\epsilon_z}\sqrt{\beta(r-1)} & -\beta
\end{bmatrix},
\end{equation}
respectively. It is interesting to observe that the eigenvalues of both \eqref{jeq1} and \eqref{jeq2} are the roots of
\vspace{-2mm}
\begin{equation}
\lambda^3+\lambda^2(\sigma+\beta+1)+\lambda\beta(\sigma+r)+2\sigma\beta(r-1)=0.
\end{equation}
According to the Hurwitz stability criterion \cite{stab}, the system is stable, if and only if, all the principal diagonal minors of
\vspace{-2mm}
\begin{equation}
H=\begin{bmatrix}
\sigma+\beta+1 & 2\sigma\beta(r-1) & 0\\
1 & \beta(\sigma+r) & 0 \\
0 & \sigma+\beta+1 & 2\sigma\beta(r-1)
\end{bmatrix},
\end{equation}
are positive, which gives
\vspace{-2mm}
\begin{equation}
(\sigma+\beta+1)\beta(\sigma+r)>2\sigma\beta(r-1) \implies r<\frac{\sigma(\sigma+\beta+3)}{\sigma-\beta-1}.
\end{equation}
Moreover, we also observe from \eqref{ldef} that $\sigma,\beta,r$ are positive quantities, i.e. the inequality $\sigma>\beta+1$ must also be satisfied. This completes the proof.

\section{Proof of Proposition \ref{prop3}}
\label{app3}

As stated in Section \ref{ssa}, in the steady state, the system attains $P_1$ or $P_2$ ( obtained in Appendix \ref{app2}), irrespective of the initial point $p_{\rm in}$. Note that for both $P_1$ and $P_2$, we have
\vspace{-2mm}
\begin{align}  \label{xval}
\lim_{t \rightarrow \infty}\mathbb{E} \{ |x_{\rm sc}|(t)^2 \}&=\frac{\beta(r-1)}{\epsilon_x^2}, \nonumber \\
\text{and} \quad \lim_{t \rightarrow \infty}\mathbb{E} \{ |x_{\rm sc}(t)|^4 \}&=\frac{\beta^2(r-1)^2}{\epsilon_x^4}.
\end{align}
From Proposition \ref{prop2}, when the system is stable, i.e. $r \in \left( 1,\frac{\sigma(\sigma+\beta+3)}{\sigma-\beta-1}\right)$, irrespective of $p_{\rm in}$, the harvested DC is
\vspace{-2mm}
\begin{align}    \label{harvchaos}
\eta_{\rm SL}&= \lim_{t \rightarrow \infty} \rho_1\mathbb{E} \{ |x_{\rm sc}(t)|^2 \}+\lim_{t \rightarrow \infty} \rho_2\mathbb{E} \{ |x_{\rm sc}(t)|^4 \} \nonumber \\
&= \frac{\rho_1\beta(r-1)}{\epsilon_x^2}+\frac{\rho_2\beta^2(r-1)^2}{\epsilon_x^4},
\end{align}
which follows from \eqref{xval}.

\bibliographystyle{IEEEtran}
\bibliography{WCL1566_refs}

\begin{thebibliography}{10}
\providecommand{\url}[1]{#1}
\csname url@samestyle\endcsname
\providecommand{\newblock}{\relax}
\providecommand{\bibinfo}[2]{#2}
\providecommand{\BIBentrySTDinterwordspacing}{\spaceskip=0pt\relax}
\providecommand{\BIBentryALTinterwordstretchfactor}{4}
\providecommand{\BIBentryALTinterwordspacing}{\spaceskip=\fontdimen2\font plus
\BIBentryALTinterwordstretchfactor\fontdimen3\font minus
  \fontdimen4\font\relax}
\providecommand{\BIBforeignlanguage}[2]{{%
\expandafter\ifx\csname l@#1\endcsname\relax
\typeout{** WARNING: IEEEtran.bst: No hyphenation pattern has been}%
\typeout{** loaded for the language `#1'. Using the pattern for}%
\typeout{** the default language instead.}%
\else
\language=\csname l@#1\endcsname
\fi
#2}}
\providecommand{\BIBdecl}{\relax}
\BIBdecl

\bibitem{nmag}
N.~Shinohara, ``Trends in wireless power transfer: {WPT} technology for energy
  harvesting, millimeter-wave/{THz} rectennas, {MIMO-WPT}, and advances in
  near-field {WPT} applications,'' \emph{IEEE Microwave Mag.}, vol.~22, no.~1,
  pp. 46--59, Jan. 2021.

\bibitem{wdesg}
B.~{Clerckx} and E.~{Bayguzina}, ``Waveform design for wireless power
  transfer,'' \emph{IEEE Trans. Signal Process.}, vol.~64, no.~23, pp.
  6313--6328, Dec. 2016.

\bibitem{bweff}
N.~{Pan}, D.~{Belo}, M.~{Rajabi}, D.~{Schreurs}, N.~B. {Carvalho}, and
  S.~{Pollin}, ``Bandwidth analysis of {RF-DC} converters under multisine
  excitation,'' \emph{IEEE Trans. Microw. Theory Techn.}, vol.~66, no.~2, pp.
  791--802, Feb. 2018.

\bibitem{papr}
D.~I. {Kim}, J.~H. {Moon}, and J.~J. {Park}, ``New {SWIPT} using {PAPR}: How it
  works,'' \emph{IEEE Wireless Commun. Lett.}, vol.~5, no.~6, pp. 672--675,
  Dec. 2016.

\bibitem{chaosexp1}
A.~{Collado} and A.~{Georgiadis}, ``Improving wireless power transmission
  efficiency using chaotic waveforms,'' in \emph{Proc. IEEE Int. Microw. Symp.
  Dig.}, Montreal, QC, Canada, Aug. 2012, pp. 1--3.

\bibitem{chaosexp2}
------, ``Optimal waveforms for efficient wireless power transmission,''
  \emph{IEEE Microw. Wireless Compon. Lett.}, vol.~24, no.~5, pp. 354--356, May
  2014.

\bibitem{jstsp}
P.~Mukherjee, C.~Psomas, and I.~Krikidis, ``Differential chaos shift
  keying-based wireless power transfer with nonlinearities,'' \emph{IEEE J.
  Sel. Topics Signal Process.}, 2021, to appear.

\bibitem{lorenz}
E.~N. Lorenz, ``Deterministic nonperiodic flow,'' \emph{J. Atmospheric Sci.},
  vol.~20, no.~2, pp. 130--141, Mar. 1963.

\bibitem{dyran}
K.~M. {Cuomo}, A.~V. {Oppenheim}, and S.~H. {Strogatz}, ``Synchronization of
  {Lorenz-based} chaotic circuits with applications to communications,''
  \emph{IEEE Trans. Circuits Syst. II. Analog Digit. Signal Process.}, vol.~40,
  no.~10, pp. 626--633, Oct. 1993.

\bibitem{eqlb}
L.~A. Sanchez, ``Convergence to equilibria in the {Lorenz} system via monotone
  methods,'' \emph{J. Differ. Equ.}, vol. 217, no.~2, pp. 341--362, 2005.

\bibitem{henon}
M.~H\'enon, ``A two-dimensional mapping with a strange attractor,''
  \emph{Commun. Math. Phys.}, vol.~50, no.~1, p. 69–77, Feb. 1976.

\bibitem{sparrow}
C.~Sparrow, \emph{The {L}orenz {E}quations: {B}ifurcations, {C}haos, and
  {S}trange {A}ttractors}.\hskip 1em plus 0.5em minus 0.4em\relax Springer,
  1982.

\bibitem{nlor}
G.~Alvarez, F.~Montoya, M.~Romera, and G.~Pastor, ``Breaking two secure
  communication systems based on chaotic masking,'' \emph{IEEE Trans. Circuits
  Syst., II, Exp. Briefs}, vol.~51, no.~10, pp. 505--506, Oct. 2004.

\bibitem{wcl}
P.~Mukherjee, S.~Lajnef, and I.~Krikidis, ``{MIMO SWIPT} systems with power
  amplifier nonlinearities and memory effects,'' \emph{IEEE Wireless Commun.
  Lett.}, vol.~9, no.~12, pp. 2187--2191, Dec. 2020.

\bibitem{papoulis}
A.~Papoulis and S.~U. Pillai, \emph{Probability, {R}andom {V}ariables, and
  {S}tochastic {P}rocesses}.\hskip 1em plus 0.5em minus 0.4em\relax McGraw-Hill
  Higher Education, 2002.

\bibitem{stab}
{B. A. Asner Jr.}, ``On the total nonnegativity of the {Hurwitz} matrix,''
  \emph{SIAM J. Appl. Math.}, vol.~18, no.~2, pp. 407--414, Mar. 1970.

\end{thebibliography}
\end{document}